\documentclass[sigconf,natbib=true]{acmart}

\AtBeginDocument{
  }

\usepackage{multirow}
\usepackage{booktabs}
\usepackage{tabularx}
\usepackage{xcolor}
\usepackage{array}
\usepackage{tikz}
\usepackage{url}

\newcommand{\myarrow}{%
    \tikz[baseline=-0.5ex]{ 
        \draw[->, >=stealth, line width=0.6pt] (0,1.2ex) -- (0,0) -- (1em,0);
    }%
}

\copyrightyear{2026}
\acmYear{2026}
\setcopyright{cc}
\setcctype{by}
\acmConference[SIGIR '26]{Proceedings of the 49th International ACM SIGIR Conference on Research and Development in Information Retrieval}{July 20--24, 2026}{Melbourne, VIC, Australia}
\acmBooktitle{Proceedings of the 49th International ACM SIGIR Conference on Research and Development in Information Retrieval (SIGIR '26), July 20--24, 2026, Melbourne, VIC, Australia}
\acmDOI{10.1145/3805712.3809724}
\acmISBN{979-8-4007-2599-9/2026/07}

\begin{document}

\title{Why Advanced Encoders Lag on Sparse Retrieval? The Answer and an Approach to Bridging Vocabulary Gaps}

\author{Zhichao Geng}
\affiliation{
  \institution{Amazon Web Service}
  \city{Shanghai}
  \country{China}
}
\email{zhichaog@amazon.com}

\author{Yang Yang}
\affiliation{%
  \institution{Amazon Web Service}
  \city{Shanghai}
  \country{China}
}
\email{yych@amazon.com}

\renewcommand{\shortauthors}{Geng et al.}

\begin{abstract}
While advanced foundation models like ModernBERT significantly outperform older architectures in dense retrieval, they surprisingly lag behind the aging BERT-base baseline in learned sparse retrieval (LSR). We identify the root cause as the \textit{Vocabulary Gap}: modern tokenizers utilize raw, case-sensitive vocabularies designed for lossless reconstruction, which map single semantic units to redundant surface forms, wasting model capacity on morphological noise and hindering lexical matching. We formalize this intuition through a theoretical framework, demonstrating that appropriate vocabulary coarse-graining can tighten the generalization bounds by reducing complexity of the hypothesis class, provided that semantic integrity is preserved. 
To resolve this, we propose \textbf{Vocabulary Transfer (VT)}, a model-agnostic framework that migrates advanced encoders to sparse-friendly, normalized vocabularies with minimal computational cost.
 VT utilizes a novel \textbf{Semantic Initialization} via spatial topology to preserve geometric structure and an \textbf{Activation Potential Calibration (APC)} mechanism to align pre-trained manifolds with sparsity constraints, preventing the dead neuron and dense collapse observed in standard fine-tuning. Empirically, VT is universally effective: it enables ModernBERT to achieve state-of-the-art performance on the BEIR benchmark (\textbf{52.4} nDCG, a \textbf{+4.7} improvement), resuscitates failing models like RoBERTa-large, and generalizes seamlessly to inference-free architectures and specialized domains. These results confirm that the performance lag is not an architectural deficiency but a solvable vocabulary mismatch. We've released our code and models.\footnote{\url{https://anonymous.4open.science/r/vocab-transfer/}. All details included.}
\end{abstract}

\begin{CCSXML}
	<ccs2012>
	<concept>
	<concept_id>10002951.10003317.10003338</concept_id>
	<concept_desc>Information systems~Retrieval models and ranking</concept_desc>
	<concept_significance>500</concept_significance>
	</concept>
	</ccs2012>
\end{CCSXML}

\ccsdesc[500]{Information systems~Retrieval models and ranking}

\keywords{SPLADE, learned sparse representations, passage retrieval}

\maketitle

\section{Introduction}

The landscape of neural information retrieval has bifurcated into two dominant paradigms: dense retrieval, which encodes queries and documents into continuous low-dimensional embeddings \cite{dpr, ance}, and learned sparse retrieval (LSR), which projects text into high-dimensional, weighted lexical vectors \cite{formal2021splade, deepimpact}.
While dense retrievers excel at capturing semantic nuances, sparse retrievers—exemplified by models like SPLADE \cite{formal2021splade}—retain the interpretability and efficiency of inverted indices while mitigating the lexical mismatch problem of traditional BM25~\cite{robertson1995okapi, manning2008ir}.

In the dense retrieval paradigm, upgrading the backbone is a proven strategy.
Modern foundations like ModernBERT \cite{warner2025smarter} provide not only stronger representations but also architectural advantages like 8k context windows and FlashAttention compatibility.

\begin{figure}[t]
    \centering
    \includegraphics[width=\linewidth]{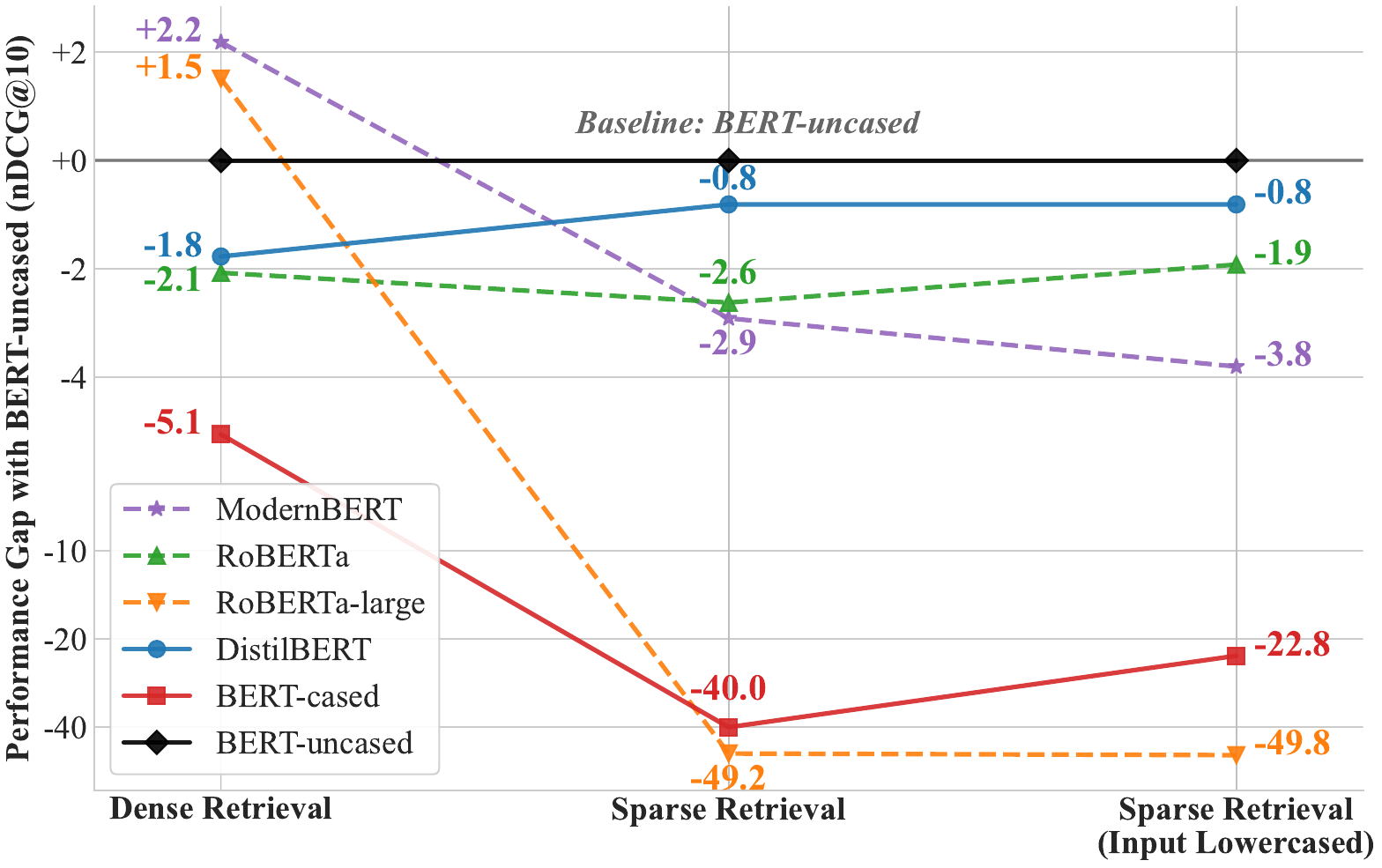} 
    \caption{The \textbf{Vocabulary Gap} anomaly.
    While advanced encoders like ModernBERT significantly outperform BERT in dense retrieval, they lag behind in sparse retrieval under standard fine-tuning.}
    \label{fig:motivation}
\end{figure}

However, these architectural leaps remain inaccessible to sparse retrieval.
We observe a puzzling anomaly: \textit{advanced encoders consistently underperform in sparse settings, often lagging behind the older BERT-base-uncased baseline.}
As illustrated in \autoref{fig:motivation}, this performance degradation is pervasive.
The most intuitive explanation attributes this to the BPE tokenizer differences in modern models.
However, we observe that \path{bert-base-cased}, which uses the same WordPiece tokenizer as the effective \path{bert-base-uncased} baseline, performs equally poorly.
This isolates the degree of vocabulary normalization as the critical variable.
This regression persists despite identical training pipelines, suggesting that the architectural advancements of modern backbones are stifled by a fundamental incompatibility with the sparse retrieval objective.

We identify the root cause as the \textbf{vocabulary gap}—specifically, the shift in modern tokenization toward raw vocabularies (i.e., lacking normalization or pre-tokenization) designed for lossless reconstruction.
These tokenizers map single semantic units to redundant surface variants (e.g., ``Token'' vs. ``token''), forcing the model to waste capacity bridging these orthogonal dimensions—a burden dense models bypass.
While forcing input lowercasing offers partial relief, it is insufficient;
aggressive lowercasing on a case-sensitive tokenizer often fragments tokens (e.g., \texttt{Halloween} $\rightarrow$ \texttt{hall}, \texttt{ow}, \texttt{een}), destroying semantic integrity.

Compounding this challenge is the prohibitive cost of remediation.
While training a model from scratch with a sparse-friendly vocabulary could theoretically solve the issue, it is computationally impractical.
Modern foundation models are trained on massive corpora—ModernBERT, for instance, on 2 trillion tokens~\cite{warner2025smarter}.
Replicating this pre-training scale simply to swap the vocabulary is infeasible for most applications.
Consequently, the field faces a dilemma: we require the reasoning power and inference efficiency of modern backbones, yet their native vocabularies are ill-suited for sparse retrieval.

In this work, we provide the answer to this lag and a method to resolve it.
We argue that sparse retrieval requires a \textit{Representation-Compatible} vocabulary—one that normalizes surface forms while preserving semantic distinctions.
We formalize this intuition through a theoretical framework showing that appropriate vocabulary coarse-graining improves the generalization bound of sparse retrievers by reducing hypothesis class complexity without sacrificing approximation power.

Guided by this theory, we propose \textbf{Vocabulary Transfer (VT)}, a recipe to migrate strong pre-trained backbones to a sparse-friendly vocabulary with \textbf{minimal cost}---using $<0.2\%$ of the original ModernBERT training tokens and achieving near-optimal performance with just 500 MLM steps.
VT utilizes a novel \textbf{Semantic Initialization} via spatial topology and an \textbf{Activation Potential Calibration} mechanism. 
This aligns the advanced backbone with the sparsity constraints of models like SPLADE, preventing the ``dead neuron'' and dense collapse observed in standard fine-tuning.

Our contributions are as follows:
\begin{itemize}
    \item \textbf{Theoretical Analysis:} We derive a generalization bound for sparse retrieval under vocabulary coarse-graining, introducing \textit{Representation Compatibility} (RC) to explain why normalization improves learnability.
    \item \textbf{Methodology:} We propose VT, a model-agnostic procedure that transplants regularized vocabularies onto advanced encoders using geometric initialization and discrepancy-aware adaptation.
    \item \textbf{Empirical Validation:} We demonstrate that VT is \textbf{universally effective}. 
    It enables ModernBERT to achieve state-of-the-art results on BEIR~\cite{thakur2021beir} (52.4 nDCG, a \textbf{+4.7} improvement), \textbf{resuscitates} failing models like RoBERTa-large, and generalizes seamlessly to \textbf{inference-free} architectures and \textbf{domain-specific} adaptation.
\end{itemize}

\section{Related Work}

\subsection{Neural Sparse Retrieval}
The evolution of information retrieval has seen a transition from exact matching heuristics, such as BM25 \cite{robertson2009probabilistic}, to neural architectures that learn semantic representations.
While dense retrieval \cite{karpukhin2020dense, xiong2020approximate} encodes queries and documents into continuous low-dimensional spaces, Learned Sparse Retrieval (LSR) projects text into high-dimensional sparse vectors, preserving the interpretability and efficiency of inverted indices.

Early LSR approaches focused on estimating term weights or expanding documents with relevant terms.
\textbf{DeepCT} \cite{dai2020context} utilized BERT to predict context-aware term weights, mapping them back to the bag-of-words space.
Similarly, \textbf{docT5query} \cite{nogueira2019document} employed generative models to expand documents with potential queries.
\textbf{SparTerm} \cite{bai2020sparterm} introduced a gating mechanism to explicitly learn term importance and enforce sparsity.
\textbf{COIL} \cite{gao2021coil} bridged the gap between sparse and dense methods by storing efficient contextualized representations in inverted lists.

The SPLADE family \cite{formal2021splade, splade2, lassance2022efficiency, lassance2024splade} represented a paradigm shift by applying sparsity regularization directly on the Masked Language Model (MLM) logits, performing simultaneous expansion and weighting.
Recent research has shifted towards \textit{inference-free} architectures to reduce query-side latency.
TILDE \cite{zhuang2021tilde} and subsequent works \cite{geng2024towards, shen2025exploring} pre-compute document representations while keeping query processing lightweight.
However, these models are exposed to the \textbf{Vocabulary Gap}, as they lack the capacity to dynamically bridge lexical mismatches between the pre-trained backbone and the retrieval task.

\subsection{Pre-trained Backbones and Tokenization}
The efficacy of Pre-trained Language Models (PLMs) is inextricably linked to their tokenization strategies.
Standard architectures like BERT \cite{devlin2019bert} utilize WordPiece \cite{schuster2012japanese}, while modern backbones such as RoBERTa \cite{liu2019roberta} and ModernBERT \cite{warner2025smarter} rely on BPE~\cite{sennrich2016neural}.
While techniques like Subword Regularization \cite{kudo2018subword} and CharacterBERT \cite{el2020characterbert} attempt to improve morphological robustness, the rigid distinctness of surface forms in standard subword vocabularies remains a fundamental bottleneck for sparse matching, necessitating significant model capacity to bridge these lexical gaps.

\subsection{Vocabulary Transfer and Adaptation}
Adapting pre-trained models to new vocabularies is a critical challenge.
This problem has been extensively studied in the context of cross-lingual transfer, where vocabulary misalignment severely hampers performance \cite{artetxe2020cross}.
To address this, various initialization strategies have been proposed to align new vocabularies with pre-trained manifolds without full retraining.
WECHSEL \cite{minixhofer2022wechsel} used a shared bilingual static embedding space to map target subwords and initializes each new token embedding as a similarity-weighted average of its $k$ nearest source subword embeddings.
More recently, FOCUS \cite{dobler-de-melo-2023-focus} was brought up to transfer the vocabulary from monolingual language model to multilingual. It leveraged FastText~\cite{bojanowski2017enriching} to derive similarity relations between the new token and anchor tokens, which are then used to weight the combination.
Mundra et al. \cite{mundra2024empirical} provided a comprehensive empirical validation of these strategies, highlighting that leveraging the source embedding structure is crucial for convergence.

In the specific context of Learned Sparse Retrieval (LSR), the impact of vocabulary design is profound yet only recently gaining attention. \citet{lionis2026case} empirically confirmed the effect of vocabulary casing on sparse retrieval, and \citet{lei2025enhancing} explored enhancing lexicon-based embeddings with LLMs.
Regarding adaptation, ESPLADE \cite{dudek2023learning,kim2025role} represents a recent attempt to transfer SPLADE capabilities to new vocabularies.
However, ESPLADE relies on computationally expensive continuous Masked Language Modeling (MLM) on large corpus to align the new embedding space.
Unlike these approaches, our work proposes a Representation-Compatible (RC) transfer method that utilizes geometric initialization to close the vocabulary gap with minimal adaptation cost.

\subsection{Theoretical Analysis of Retrieval Models}

Theoretical analyses of retrieval models traditionally focus on probabilistic relevance modeling and term-weighting schemes such as BM25 within the probabilistic relevance framework~\cite{robertson2009probabilistic, manning2008ir}.
For modern neural models, most available theory comes from general learning-theoretic tools rather than IR-specific analyses.
In particular, Rademacher-complexity-based bounds for linear predictors with $\ell_1$ and $\ell_2$ constraints provide sharp estimates of sample complexity and margin-based generalization for sparse linear hypothesis classes~\cite{Kakade2009}.
These results underpin many later analyses of regularization, sparsity, and high-dimensional learning.
For learned sparse retrieval, existing work has focused largely on empirical or architectural aspects.
To the best of our knowledge, there is still little work that explicitly connects vocabulary design and normalization to capacity measures or sample complexity in LSR.

\section{Theoretical Analysis}
\label{sec:theory}

We give a unified analysis for sparse retrievers that operate in a shared discrete keyspace (tokens/terms) with nonnegative, sparse weights on both sides.
To avoid overloading symbols, we reserve $d$ for documents and use $p$ for feature-space dimensionality throughout this section.

\subsection{Modeling via RC Coarse-Graining}
\label{subsec:model-rc}

While neural sparse retrievers like SPLADE utilize dot-products between two learned encoders, we analyze the generalization capability of the document encoder by treating the query encoder as generating a distribution of linear weights. This standard reduction allows us to apply Rademacher complexity analysis to the sparse representation learning problem.

\paragraph{Sparse keyspace.}
Let $V$ be a discrete keyspace.
For query $q$ and document $d$, let $w_{\theta,q},w_{\theta,d}\in\mathbb{R}_{\ge0}^{V}$ be sparse encoder weights.
We consider separable per-key features
\[
[u_\theta(q,d)]_t \triangleq \psi(w_{\theta,q}(t))\,\phi(w_{\theta,d}(t)),
\qquad u_\theta(q,d)\in[0,R]^{|V|},
\]
where $\psi,\phi:\mathbb{R}_{\ge0}\to\mathbb{R}_{\ge0}$ are non-decreasing.
Assume $\ell_1$ budgets (encouraged by sparsity regularization)
$\|w_{\theta,q}\|_1\le S_q$, $\|w_{\theta,d}\|_1\le S_d$,
which imply $\|u_\theta(q,d)\|_\infty\le R\triangleq \psi(S_q)\phi(S_d)$.

\paragraph{Coarse-graining.}
A normalizer induces a many-to-one map $\pi:V\to V'$.
Let $G\in\mathbb{R}_{\ge0}^{|V'|\times|V|}$ be row-stochastic and $\pi$-respecting:
$G_{ut}=0$ if $t\notin\pi^{-1}(u)$ and $\sum_{t\in\pi^{-1}(u)}G_{ut}=1$.
Define coarse-grained features
\[
u'_\theta(q,d)\triangleq G\,u_\theta(q,d)\in[0,R]^{|V'|}.
\]

\paragraph{Hypothesis classes.}
With a shared $\ell_1$ budget $B$, define
\begin{align}
\mathcal{H}_V
&=\{\langle\beta,u_\theta(q,d)\rangle:\ \beta\!\ge\!0,\ \|\beta\|_1\le B\}, \label{eq:HV}\\
\mathcal{H}_{V'}
&=\{\langle\beta',u'_\theta(q,d)\rangle:\ \beta'\!\ge\!0,\ \|\beta'\|_1\le B\}. \label{eq:HVp}
\end{align}

\paragraph{Representation-compatibility (RC)}
Overly aggressive coarse-graining (e.g., heavy stemming) can conflate meanings, so we focus on normalizers that mostly merge surface variants (e.g., case folding~\cite{manning2008ir}).
The aggregation $G$ is \emph{RC} if there exists $\varepsilon_{\mathrm{RC}}\ge0$ such that
for every $\beta\ge0$ with $\|\beta\|_1\le B$ there is $\beta'\ge0$,
$\|\beta'\|_1\le B$, satisfying
\begin{equation}
\sup_{(q,d)}
\big|\langle\beta,u_\theta(q,d)\rangle-\langle\beta',u'_\theta(q,d)\rangle\big|
\le \varepsilon_{\mathrm{RC}}.
\tag{RC$^\star$}
\label{eq:RC}
\end{equation}

\subsection{Generalization under RC Coarse-Graining}
\label{subsec:gen-rc}

\paragraph{Rademacher tools.}
Let $p\in\mathbb{N}$, $B,R>0$, and $\mathcal{G}_p=\{\langle\beta,x\rangle:\ \beta\!\ge\!0,\ \|\beta\|_1\le B\}$.
For a vector class $\mathcal{U}\subset[0,R]^p$, we use the $\ell_\infty$-type empirical Rademacher complexity
\[
\hat{\mathfrak{R}}_n(\mathcal{U};\|\cdot\|_\infty)
\;\triangleq\;
\mathbb{E}_{\sigma}\!\left[
  \sup_{u\in\mathcal{U}}
  \left\|
    \frac{1}{n}\sum_{i=1}^n \sigma_i\,u(z_i)
  \right\|_{\infty}
\right],
\]
where $z_{1:n}$ is the sample and $\sigma_{1:n}$ are i.i.d.\ Rademacher signs.
Then the following bounds hold~\cite{bartlett2002rademacher, Mohri2018}:
\begin{equation}
\label{eq:rad-composition}
\hat{\mathfrak{R}}_n(\mathcal{G}_p\!\circ\!\mathcal{U})
\ \le\ B\,\hat{\mathfrak{R}}_n(\mathcal{U};\|\cdot\|_\infty).
\end{equation}
If $G\in\mathbb{R}_{\ge0}^{p'\times p}$ is row-stochastic, then~\cite{horn2013matrix}
\begin{equation}
\label{eq:rad-contraction-G}
\hat{\mathfrak{R}}_n(G\!\circ\!\mathcal{U};\|\cdot\|_\infty)
\ \le\ \hat{\mathfrak{R}}_n(\mathcal{U};\|\cdot\|_\infty).
\end{equation}

\begin{lemma}[Row-stochastic aggregation does not increase feature-class complexity]
\label{lem:rad-rc}
Consequences for $u'_\theta=G\,u_\theta$.
Recall $\mathcal{W}_V=\{u_\theta(\cdot)\in[0,R]^{|V|}\}$ and $\mathcal{W}_{V'}=\{u'_\theta(\cdot)=G\,u_\theta(\cdot)\in[0,R]^{|V'|}\}$, and the linear heads $\mathcal{H}_V,\mathcal{H}_{V'}$ from \eqref{eq:HV}–\eqref{eq:HVp}.
By \eqref{eq:rad-contraction-G},
\begin{equation}
\label{eq:feature-class-no-increase}
\hat{\mathfrak{R}}_n(\mathcal{W}_{V'};\|\cdot\|_\infty)
\ \le\ \hat{\mathfrak{R}}_n(\mathcal{W}_V;\|\cdot\|_\infty),
\end{equation}
\end{lemma}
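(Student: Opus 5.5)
The inequality \eqref{eq:feature-class-no-increase} follows from \eqref{eq:rad-contraction-G} applied with $\mathcal{U}=\mathcal{W}_V$, $p=|V|$, $p'=|V'|$. Still, I would verify the contraction directly rather than only cite it, because the argument is short and exposes the one property of $G$ that matters. The plan is to fix the sample $z_{1:n}$ and the Rademacher signs $\sigma_{1:n}$, compare the two suprema pointwise, and then take $\mathbb{E}_\sigma$.

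\textbf{Step 1 (linearity).} For any $\theta$, linearity of $G$ gives
\[
\frac1n\sum_{i=1}^n\sigma_i\,u'_\theta(z_i)=G\Big(\frac1n\sum_{i=1}^n\sigma_i\,u_\theta(z_i)\Big)\triangleq G\,v_\theta .
\]

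\textbf{Step 2 (operator norm).} Since $G\ge0$ is row-stochastic, its $\ell_\infty\to\ell_\infty$ operator norm is the maximum absolute row sum, which equals $1$. Hence for every row $u\in V'$,
\[
|(Gv)_u|\le\sum_{t}G_{ut}|v_t|\le\|v\|_\infty\sum_t G_{ut}=\|v\|_\infty,
\]
so $\|G v_\theta\|_\infty\le\|v_\theta\|_\infty$. The signed vector $v_\theta$ may have mixed-sign entries, but the bound uses only $|v_t|$, so this causes no difficulty.

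\textbf{Step 3 (supremum and expectation).} Every element of $\mathcal{W}_{V'}$ has the form $G u_\theta$ for some $u_\theta\in\mathcal{W}_V$. Taking the supremum over $\theta$ on both sides of Step 2 therefore preserves the inequality. Taking $\mathbb{E}_\sigma$ then yields \eqref{eq:feature-class-no-increase}.

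I would also note that $G u_\theta\in[0,R]^{|V'|}$, since each coordinate is a convex combination of entries of $u_\theta$. This ensures $\mathcal{W}_{V'}$ satisfies the standing range assumption, so \eqref{eq:rad-composition} can be applied to it afterwards.

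\textbf{Step 4 (consequence for the heads).} Combining with \eqref{eq:rad-composition} would give $\hat{\mathfrak{R}}_n(\mathcal{H}_{V'})\le B\,\hat{\mathfrak{R}}_n(\mathcal{W}_V;\|\cdot\|_\infty)$. This is the same bound available for $\mathcal{H}_V$, so the coarse-grained head class is no more complex.

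\textbf{Main obstacle.} There is no real analytic obstacle here. The only point requiring care is the definition of $\mathcal{W}_{V'}$ as the image of $\mathcal{W}_V$ under the single fixed $G$, indexed by the same parameters $\theta$. This is what allows the supremum to be matched term by term; if $G$ were allowed to vary with $\theta$, the argument would still hold, but only because the row-stochastic bound is uniform over all such $G$.
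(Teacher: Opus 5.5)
Your proof is correct and follows the same route as the paper, which obtains the lemma simply by invoking \eqref{eq:rad-contraction-G} with $\mathcal{U}=\mathcal{W}_V$; your Steps 1--3 unpack that cited contraction via linearity of $G$ and the unit $\ell_\infty\to\ell_\infty$ operator norm of a nonnegative row-stochastic matrix. One small caveat concerns your Step 4 aside: two classes sharing the same upper bound does not by itself order their complexities. The conclusion is nonetheless true, since $\langle\beta',Gu_\theta\rangle=\langle G^\top\beta',u_\theta\rangle$ with $G^\top\beta'\ge0$ and $\|G^\top\beta'\|_1=\|\beta'\|_1\le B$, so $\mathcal{H}_{V'}\subseteq\mathcal{H}_V$.
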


\begin{theorem}[Sample complexity improves under RC coarse-graining]
\label{thm:sc}
Let $\ell:\mathbb{R}\to[0,1]$ be $L$-Lipschitz and let $\hat h_G\in\mathcal{H}_{V'}$ be an ERM on $n$ samples.
Assume RC in \eqref{eq:RC}, which implies $\inf_{h\in\mathcal{H}_{V'}}\mathcal{L}(h)\ \le\ \inf_{h\in\mathcal{H}_V}\mathcal{L}(h)\ +\ L\,\varepsilon_{\mathrm{RC}}$.
Then for any $\delta\in(0,1)$, with probability at least $1-\delta$,
\begin{align}
\mathcal{L}(\hat h_G)-\inf_{h\in\mathcal{H}_V}\mathcal{L}(h)
\le\;& L\,\varepsilon_{\mathrm{RC}}
 + 4BL\,\hat{\mathfrak{R}}_n(\mathcal{W}_{V'};\|\cdot\|_\infty)\nonumber\\
& + 2C_{\mathrm{gen}}\sqrt{\tfrac{\log(1/\delta)}{n}}.
\label{eq:main-bound}
\end{align}
Moreover $|V'|\le |V|$ and \eqref{eq:feature-class-no-increase} hold, hence the feature-class complexity term does not increase while the estimation error bound tightens.
If, in addition, $\varepsilon_{\mathrm{RC}}$ is sufficiently small, then the overall generalization bound under coarse-graining is tighter.
\end{theorem}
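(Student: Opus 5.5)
The plan is to split the excess risk against the fine-vocabulary class into an approximation part and an estimation part. Let $h^\star_{V'}\in\arg\min_{h\in\mathcal{H}_{V'}}\mathcal{L}(h)$, or an $\eta$-minimizer with $\eta\to0$ if no minimizer exists. Then
\[
\mathcal{L}(\hat h_G)-\inf_{h\in\mathcal{H}_V}\mathcal{L}(h)
=\Big[\mathcal{L}(\hat h_G)-\mathcal{L}(h^\star_{V'})\Big]
+\Big[\mathcal{L}(h^\star_{V'})-\inf_{h\in\mathcal{H}_V}\mathcal{L}(h)\Big].
\]

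\emph{The second bracket} is controlled by \eqref{eq:RC}. Fix $h=\langle\beta,u_\theta\rangle\in\mathcal{H}_V$. RC supplies $\beta'$ with $\beta'\ge0$ and $\|\beta'\|_1\le B$ such that the two scores differ by at most $\varepsilon_{\mathrm{RC}}$ uniformly over $(q,d)$. Since $\ell$ is $L$-Lipschitz, the pointwise losses differ by at most $L\varepsilon_{\mathrm{RC}}$, and so do their expectations. Taking the infimum over $\beta$ gives the stated implication $\inf_{\mathcal{H}_{V'}}\mathcal{L}\le\inf_{\mathcal{H}_V}\mathcal{L}+L\varepsilon_{\mathrm{RC}}$. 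This produces the first term of \eqref{eq:main-bound}.

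\emph{The first bracket} is the usual ERM estimation error. I would apply the standard symmetrization and McDiarmid argument to the loss class $\ell\circ\mathcal{H}_{V'}$, which takes values in $[0,1]$. With probability at least $1-\delta$ this yields the uniform deviation bound
\[
\sup_{h\in\mathcal{H}_{V'}}|\mathcal{L}(h)-\hat{\mathcal{L}}(h)|\le 2\hat{\mathfrak{R}}_n(\ell\circ\mathcal{H}_{V'})+C_{\mathrm{gen}}\sqrt{\log(1/\delta)/n},
\]
where the constant absorbs the factor $\sqrt{1/2}$ and the passage from expected to empirical complexity. Because $\hat h_G$ minimizes the empirical risk, $\hat{\mathcal{L}}(\hat h_G)\le\hat{\mathcal{L}}(h^\star_{V'})$. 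Hence the first bracket is at most twice the uniform deviation, which gives $4\hat{\mathfrak{R}}_n(\ell\circ\mathcal{H}_{V'})+2C_{\mathrm{gen}}\sqrt{\log(1/\delta)/n}$. Talagrand's contraction lemma then gives $\hat{\mathfrak{R}}_n(\ell\circ\mathcal{H}_{V'})\le L\,\hat{\mathfrak{R}}_n(\mathcal{H}_{V'})$. Writing $\mathcal{H}_{V'}=\mathcal{G}_{|V'|}\circ\mathcal{W}_{V'}$ and applying \eqref{eq:rad-composition} gives $\hat{\mathfrak{R}}_n(\mathcal{H}_{V'})\le B\,\hat{\mathfrak{R}}_n(\mathcal{W}_{V'};\|\cdot\|_\infty)$. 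Together these yield the middle term $4BL\,\hat{\mathfrak{R}}_n(\mathcal{W}_{V'};\|\cdot\|_\infty)$.

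\emph{The comparison claims} come from the construction and from Lemma~\ref{lem:rad-rc}. We have $|V'|\le|V|$ because $\pi$ is a many-to-one map from $V$ onto $V'$, and \eqref{eq:feature-class-no-increase} holds by Lemma~\ref{lem:rad-rc}. So the estimation term under coarse-graining is at most its counterpart $4BL\,\hat{\mathfrak{R}}_n(\mathcal{W}_V;\|\cdot\|_\infty)$ for ERM over $\mathcal{H}_V$, which has no approximation penalty. The coarse bound is therefore tighter whenever
\[
L\varepsilon_{\mathrm{RC}}<4BL\big[\hat{\mathfrak{R}}_n(\mathcal{W}_V)-\hat{\mathfrak{R}}_n(\mathcal{W}_{V'})\big],
\]
and this is how I would make precise the phrase "sufficiently small".

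The step I expect to be the main obstacle is the bookkeeping in the estimation bound. Two points need care. First, $C_{\mathrm{gen}}$ must be a uniform constant: it must cover both the McDiarmid deviation and the conversion between expected and empirical Rademacher complexity. The latter requires a second concentration step and a union bound, with $\delta$ halved. Second, the sample $z_{1:n}$, which fixes the empirical complexity, must be the same sample on which ERM is run. I would handle both by stating the bound with data-dependent $\hat{\mathfrak{R}}_n$ and folding all residual constants into $C_{\mathrm{gen}}$.
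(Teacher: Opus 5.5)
Your proposal is correct and follows essentially the same route as the paper: split the excess risk at the best-in-class $h^\star\in\mathcal{H}_{V'}$, and bound the approximation part by the RC-transfer argument (Lipschitz loss, then infimum over $\beta$). Bound the estimation part by twice a uniform deviation, controlled through symmetrization, Ledoux--Talagrand contraction, McDiarmid, and the H\"older-type composition bound \eqref{eq:rad-composition}, and read the comparison claims off Lemma~\ref{lem:rad-rc}.

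Your two-sided deviation $\sup_h|\mathcal{L}(h)-\hat{\mathcal{L}}(h)|$ is a bit more careful than the paper, which bounds $\hat{\mathcal{L}}_n(h^\star)-\mathcal{L}(h^\star)$ by the one-sided $\sup_h(\mathcal{L}(h)-\hat{\mathcal{L}}_n(h))$, the wrong direction. Your explicit inequality on $\varepsilon_{\mathrm{RC}}$ is a sensible way to make ``sufficiently small'' precise.
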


The proof follows from standard Rademacher symmetrization, Ledoux--Talagrand contraction, and McDiarmid's inequality; full details are in Appendix~\ref{app:proof-thm-sc}.

\paragraph{Corollary (pointwise \& pairwise).}
The theorem holds verbatim for pointwise and pairwise training by replacing $u_\theta(q,d)$ with the triplet difference $u_\theta(q,d^+)-u_\theta(q,d^-)$.
Consequently, the $\ell_\infty$–type vector Rademacher complexity appearing in Theorem~\ref{thm:sc} does not increase by more than a factor of $2$.
Hence the generalization bound holds verbatim for both pointwise and pairwise training.

\subsection{Inference-Free as a Special Case}
\label{subsec:theory-doconly}
For inference-free sparse retrievers~\cite{shen2025exploring, geng2024towards, splade2}, the query-side weights are fixed incidence vectors determined by the query text (i.e., $w_{\theta,q}$ is replaced by $w_q$), so the preceding analysis applies verbatim.
Under RC coarse-graining $G$, the induced feature-class complexity under the $\ell_\infty$-type Rademacher measure does not increase, hence the estimation term in Theorem~\ref{thm:sc} is no larger.

\paragraph{\textbf{Takeaway.}}
This result suggests a trade-off: coarse-graining ($V \to V'$) prevents the inflation of the Rademacher complexity term inherent to large raw vocabularies, potentially tightening the bound if the approximation cost $\varepsilon_{\mathrm{RC}}$ (introduced by merging tokens) is kept minimal.

\section{Vocabulary Transfer (VT): From Theory to Minimal Migration Cost}
\label{sec:vt}

\subsection{Design Goals}
Advanced encoders such as ModernBERT often underperform in neural sparse retrieval due to vocabulary mismatch and excessive surface-form variability. 
Our theoretical analysis in Section~\ref{sec:theory} suggests that a representation-compatible, coarse-grained vocabulary yields better generalization guarantees. However, the theory describes the destination, not the path. Training such a model from scratch is costly. Therefore, our goal is to migrate a pretrained backbone from its source vocabulary $V$ to a more regularized, existing target vocabulary $V'$ (e.g., \path{bert-base-uncased}) at \textbf{minimal migration cost}. 
Our method adheres to two principles: 
(1) \textbf{Distributional Consistency}, ensuring initialized parameters preserve the statistical priors of the target domain; and 
(2) \textbf{Optimization Efficiency}, prioritizing the adaptation of new semantic units to pre-condition the model for downstream sparsity constraints.

\subsection{Method: A Three-Step VT Recipe}
Let the source model be $\mathcal{M}=(V,E, \mathbf{b})$, where $E \in \mathbb{R}^{|V| \times d}$ is the embedding matrix and $\mathbf{b} \in \mathbb{R}^{|V|}$ is the output bias vector representing unigram log-probabilities. The VT procedure produces $\mathcal{M}'=(V',E', \mathbf{b}')$ using the following steps.

\subsubsection{Step 1: Target Vocabulary Alignment}
We align the model with a well-normalized, lowercased target vocabulary $V'$, selected from existing Language Models. This choice allows us to leverage their pre-trained word embeddings for semantic initialization. 

\subsubsection{Step 2: Embedding and Bias Initialization}
Let $O = V \cap V'$ denote overlapping tokens and $N' = V' \setminus O$ new tokens. Simple topological initialization is insufficient as it ignores discrepancies in prior probabilities. We propose a joint initialization of spatial embeddings and biases:

\noindent\textbf{Semantic Initialization via Spatial Topology.}
Our goal is to initialize $E'\in\mathbb{R}^{|V'|\times d}$ so that (i) tokens shared by both vocabularies preserve the pretrained model’s geometry, and (ii) newly introduced units land in semantically plausible regions of the \emph{source} embedding manifold, avoiding random starts that would otherwise require long adaptation. This initialization aims to minimize the representation discrepancy (related to $\varepsilon_{\mathrm{RC}}$ in Theorem~\ref{thm:sc}), ensuring that the starting point of the migration satisfies the compatibility assumptions made in our theoretical framework.
For overlap tokens $u\in O$, we keep the pretrained parameters unchanged:
$E'_u \leftarrow E_u$.
For a new token $t\in N'$, we \emph{transfer neighborhoods} from the target embedding space into the source space using the overlap set $O$ as anchors.

Let $\tilde{E}\in\mathbb{R}^{|V'|\times d}$ denote pretrained embeddings associated with the target vocabulary (e.g., \path{bert-base-uncased}).
We first compute an affinity vector $\mathbf{s}_t\in\mathbb{R}^{|O|}$ between $t$ and each anchor $u\in O$ using cosine similarity:
\begin{equation}
    s_{t,u}=\cos(\tilde{E}_t,\tilde{E}_u)=\frac{\tilde{E}_t^\top \tilde{E}_u}{\|\tilde{E}_t\|\,\|\tilde{E}_u\|}.
\end{equation}
A dense interpolation over all anchors is undesirable: it blurs semantic neighborhoods and may introduce spurious mass on weakly related anchors, especially when $|O|$ is large.
We therefore convert affinities into a \emph{sparse} convex weighting $\boldsymbol{\alpha}_t$ by projecting onto the simplex with \path{sparsemax}~\cite{martins2016softmax}:
\begin{equation}
    \boldsymbol{\alpha}_t=\text{sparsemax}(\mathbf{s}_t)
    =\mathop{\mathrm{argmin}}_{\mathbf{p}\in\Delta^{|O|-1}}\|\mathbf{p}-\mathbf{s}_t\|^2,
\end{equation}
which yields $\boldsymbol{\alpha}_t\ge 0$, $\sum_{u\in O}\alpha_{t,u}=1$, and only a small subset of nonzero neighbors.
Finally, we synthesize the source-space initialization by barycentric interpolation over the corresponding \emph{source} anchors:
\begin{equation}
    E'_t \leftarrow \sum_{u\in O} \alpha_{t,u}\,E_u.
\end{equation}
This constructs $E'_t$ as the point whose local neighborhood (with respect to anchors) matches that of $t$ in the target space, effectively preserving \emph{relative} semantic topology while staying on the pretrained source manifold.
In practice, this produces meaningful embeddings before any MLM adaptation, substantially reducing the optimization burden for newly introduced tokens.

Alternatively, for custom vocabularies lacking a corresponding pre-trained model, we employ \textbf{Sub-token Initialization}. We tokenize each new token $t$ using the source tokenizer into constituent sub-tokens and initialize $E'_t$ as the mean of their source embeddings. This constructs a semantic approximation from the source model's existing sub-word units.

\noindent\textbf{Prior-Aware Distribution Alignment.}
The output bias $\mathbf{b}$ captures unigram priors. To map the target vocabulary's prior structure into the source model's dynamic range, we apply a \emph{Z-score distribution transfer}:
\begin{equation}
\mathbf{b}' \leftarrow \mu(\mathbf{b}_{src}) + \sigma(\mathbf{b}_{src}) \cdot \frac{\mathbf{b}_{tgt} - \mu(\mathbf{b}_{tgt})}{\sigma(\mathbf{b}_{tgt})}
\end{equation}
where $\mu(\cdot)$ and $\sigma(\cdot)$ denote mean and standard deviation. This ensures common words in $V'$ receive higher initial biases while strictly adhering to the logit scale expected by the source Transformer.

\subsubsection{Step 3: Discrepancy-Aware Adaptation}
We freeze the Transformer layers and update only the embedding layer via a short Masked Language Modeling (MLM) phase, utilizing two efficiency mechanisms:

\noindent\textbf{Overlap-Aware Masking Curriculum.} 
Uniform masking is inefficient since tokens in $O$ are already well-learned. We use importance sampling for masking probabilities $P_{\text{mask}}(t) \propto \omega_t$, where $\omega_t = 1$ if $t \in O$ and $\omega_t = \lambda$ if $t \in N'$ ($\lambda > 1$). This curriculum focuses gradients on unaligned regions to accelerate convergence while preventing catastrophic forgetting.

\noindent\textbf{Activation Potential Calibration.} 
Sparse retrievers like SPLADE rely on ReLU to induce sparsity. However, improper initialization hinders adaptation: globally low logits lead to low activation rates and ``dead neurons,'' while globally high logits result in high activation rates and large magnitudes, producing excessive inner products that cause ``dense collapse.'' We shift the bias $\mathbf{b}'$ by a scalar $c$ after MLM ($\mathbf{b}' \leftarrow \mathbf{b}' - c$), where $c$ is determined by probing a subset of the training data. This calibration places the activation rate within a moderate range and ensures that non-zero activations follow a long-tail distribution from zero to the maximum value, creating an ideal regime for margin-based distillation.

\begin{table*}[t]
\centering
\caption{\textbf{Main Results on BEIR.} We report efficiency metrics (document length and FLOPs) and nDCG\textsubscript{10} performance for each dataset. $\dagger$ indicates models provided by us.  The best performance results are \textbf{bolded}.}
\label{tab:beir_detailed_results}
\resizebox{\textwidth}{!}{
\begin{tabular}{l cc | ccccccccccccc | c}
\toprule
 & \multicolumn{2}{c|}{\textbf{Efficiency}} & \multicolumn{13}{c|}{\textbf{Performance per Dataset (nDCG\textsubscript{10})}} & \textbf{Avg} \\
\cmidrule(lr){2-3} \cmidrule(lr){4-16} \cmidrule(l){17-17}
\textbf{Model} & \textbf{Doc\_Len} & \textbf{FLOPs} & \textbf{Arg} & \textbf{Cli} & \textbf{DBP} & \textbf{FEV} & \textbf{FiQA} & \textbf{Hot} & \textbf{NFC} & \textbf{NQ} & \textbf{Quo} & \textbf{SCI} & \textbf{SciF} & \textbf{Tou} & \textbf{TREC} & \textbf{nDCG} \\
\midrule
Co-SelfDistil~\cite{formal2022distillation} & 197.0 & 11.1 & 49.3 & 26.2 & 44.1 & 81.6 & 36.2 & \textbf{69.3} & 35.4 & 54.2 & \textbf{85.0} & \textbf{16.0} & 71.5 & 24.8 & 72.4 & 51.2 \\
Co-EnsembleDistil~\cite{formal2022distillation} & 159.5 & 7.8 & 50.8 & 24.4 & 43.6 & 80.0 & 35.5 & 68.7 & 35.3 & 53.9 & 83.4 & 15.8 & 70.8 & 27.3 & 72.5 & 50.9 \\
SPLADE-v3~\cite{lassance2024splade} & 213.2 & 8.0 & 48.7 & 25.6 & 45.1 & 81.0 & \textbf{38.1} & 69.2 & \textbf{36.3} & \textbf{58.7} & 81.4 & 15.6 & \textbf{71.6} & 31.2 & 73.1 & 52.0 \\
BERT (Splade) $^\dagger$ & 198.7 & 13.0 & \textbf{51.6} & 24.7 & 43.4 & 81.0 & 35.1 & 69.2 & 34.9 & 53.8 & 80.7 & 15.7 & 71.3 & 25.8 & 70.8 & 50.6 \\
\textbf{ModernBERT-VT} $^\dagger$ & 203.1 & 9.1 & 49.0 & \textbf{27.8} & \textbf{45.1} & \textbf{84.4} & 36.3 & 68.8 & 35.8 & 55.9 & 83.6 & 15.6 & 70.5 & \textbf{33.0} & \textbf{75.9} & \textbf{52.4} \\
\bottomrule
\end{tabular}
}
\end{table*}

\section{Experimental Setup}
\label{sec:experimental_setup}

\subsection{Datasets and Evaluation Metrics}

\paragraph{Training Data}
For the \textbf{MLM adaptation} phase, we use the combined English Wikipedia and BookCorpus~\cite{zhu2015aligning} datasets, comprising approximately 6.2 million documents and 3.7 billion tokens.
For the \textbf{sparse retrieval fine-tuning}, we utilize the MS MARCO Passage Retrieval dataset~\cite{msmarco}. Specifically, we use the \textbf{identical training data with \path{SPLADE-EnsembleDistil}}, i.e., the \path{msmarco-hard-negatives}\footnote{https://huggingface.co/datasets/sentence-transformers/msmarco-hard-negatives} dataset, which includes hard negatives mined from MS MARCO and supervisory scores generated by a cross-encoder teacher model\footnote{https://huggingface.co/cross-encoder/ms-marco-MiniLM-L6-v2}.

\paragraph{Evaluation Benchmarks}
For in-domain evaluation, we report \textbf{MRR@10} and \textbf{Recall@1000} on the MS MARCO official development set (Dev).
We also ensure robust evaluation using the TREC-DL 2019~\cite{craswell2025overview} query sets, reporting \textbf{nDCG@10} and \textbf{Recall@1000}.
To evaluate \textbf{zero-shot} generalization, we test on the BEIR benchmark~\cite{thakur2021beir}. Following previous work~\cite{formal2021splade, splade2, lassance2022efficiency, lassance2024splade, geng2024towards}, we use a subset of 13 datasets: \textit{TREC-COVID, NFCorpus, NQ, HotpotQA, FiQA-2018, ArguAna, Webis-Touché2020, DBPedia-Entity, SCIDOCS, FEVER, Climate-FEVER, SciFact, and Quora}.
We report \textbf{nDCG@10}.

\subsection{Models and Baselines}
Our analysis centers on the performance disparity between established encoders and modern architectures in sparse retrieval.

\paragraph{Backbones}
We utilize \path{answerdotai/ModernBERT-base}~\cite{warner2025smarter} as our primary modern backbone.
To demonstrate our VT method, we migrate this model to the vocabulary of \path{bert-base-uncased}. In this transfer, overlapping tokens account for 56.4\% of the vocabulary, while new tokens account for 43.6\%.
We compare these against standard BERT-based sparse implementations.

\paragraph{Baselines}
We compare our models against a comprehensive set of baselines categorized into two groups.
\textbf{Reference Baselines:} These include the traditional lexical model BM25 and established dense and sparse neural retrievers such as DPR, CoCondenser, ColBERTv2, uniCOIL, DeepImpact~\cite{deepimpact}, and DeeperImpact.
Baseline results are sourced directly from their respective original publications.
\textbf{SPLADE Family \& Derivatives:} We compare against the standard SPLADE models, specifically CoCondenser-SelfDistil and CoCondenser-EnsembleDistil~\cite{formal2022distillation}, as well as the recent SPLADE-v3~\cite{lassance2024splade} and ESPLADE~\cite{dudek2023learning}.
To ensure a fair comparison and eliminate discrepancies arising from different evaluation pipelines, we re-evaluated all available open-source SPLADE checkpoints using our own evaluation pipeline.

\subsection{Implementation Details}

\paragraph{Training Protocol}
For the \textbf{VT Adaptation (MLM)} phase, we train for 20k steps (approx. 1 epoch) on Wikipedia and BookCorpus.
We use the AdamW optimizer with a learning rate of 3e-4, a cosine scheduler with 4,000 warmup steps, and a global batch size of 2,048.
Input sequences are truncated to 128 tokens, and the MLM masking probability is set to 0.3.
In the Overlap-Aware Masking Curriculum, the importance sampling weight for new tokens is set to $\lambda=2$.
For the \textbf{Activation Potential Calibration (APC)}, we set the scalar shift $c=5$, which results in an initial activation rate of approximately 40\%.

For \textbf{Sparse Retrieval Fine-tuning}, our training pipeline and configuration align strictly with established protocols of \path{SPLADE-EnsembleDistil}~\cite{formal2022distillation} to isolate the impact of the backbone and vocabulary.
We employ a knowledge distillation approach.
The model is fine-tuned using the MarginMSE loss combined with FLOPs regularization~\cite{paria2020minimizing} to control sparsity.
All models are trained with a maximum sequence length of 256 tokens.

\paragraph{Evaluation Configuration}
During Evaluation, document inputs are truncated to a maximum length of 512 tokens.
We use OpenSearch\footnote{https://opensearch.org/} as our lexical search engine to construct the inverted index and perform the retrieval process.
Metrics are computed using the official BEIR evaluation toolkit.

\paragraph{Reproducibility}
To ensure reproducibility, we fix the random seed to 42 for all experiments.
We execute training for a fixed 150k steps and select the final checkpoint for evaluation.
The implementation are based on PyTorch~\cite{paszke2019pytorch} and HuggingFace transformers~\cite{wolf2019huggingface} library.
All experiments were conducted on 8 NVIDIA A100 Tensor Core GPUs (80GB VRAM).

\section{Results and Analysis}

\begin{table}[t]
\centering
\caption{\textbf{Main Results on BEIR (OOD) and MS MARCO (In-Domain).}
For SPLADE-based models, we report knowledge distillation (KD) teacher numbers.
We report average nDCG\textsubscript{10} on BEIR.
$\dagger$ indicates models provided by us.
For our primary model (ModernBERT-VT), subscripts denote the standard deviation over 5 random seeds; other entries use a single seed (42).
The best results in each group are \textbf{bolded}, and the best overall results are \underline{underlined}.}
\label{tab:main}
\resizebox{\columnwidth}{!}{
\begin{tabular}{l c | c | cc cc}
\toprule
\textbf{Model} & \textbf{Teach.} & \textbf{BEIR} & \multicolumn{2}{c}{\textbf{MSM}} & \multicolumn{2}{c}{\textbf{DL-19}} \\
 &  &  & \textbf{MRR\textsubscript{10}} & \textbf{R\textsubscript{1k}} & \textbf{nDCG\textsubscript{10}} & \textbf{R\textsubscript{1k}} \\
\midrule

\multicolumn{7}{l}{\textit{\textbf{Reference Baselines}}} \\
BM25 
 & -
 & 43.7 
 & 18.4 & 85.3 & 50.6 & 74.5 \\

DPR~\cite{karpukhin2020dense} 
 & -
 & 37.5 
 & 31.9 & 94.1 & 61.1 & 74.2 \\

CoCondenser~\cite{gao2021condenser} 
 & -
 & 42.0 
 & 38.2 & 98.4 & 67.4 & 82.0 \\

ColBERTv2~\cite{santhanam2022colbertv2} 
 & -
 & \textbf{50.0} 
 & \textbf{39.7} & \textbf{98.5} & \underline{\textbf{74.4}} & \underline{\textbf{88.2}} \\

uniCOIL~\cite{lin2021few,thakur2023sprint}
 & -
 & 44.1
 & 35.1 & - & 69.3 & - \\

DeepImpact~\cite{deepimpact,thakur2023sprint}
 & -
 & 41.5
 & 32.7 & 94.8 & 69.5 & - \\

DeeperImpact~\cite{basnet2024deeperimpact}
 & -
 & - 
 & 37.3 & 96.8 & - & - \\

BERT (dense) $^\dagger$
 & -
 & 42.1 
 & 32.9 & 93.3 & 65.7 & 64.7 \\

ModernBERT (dense) $^\dagger$
 & -
 & 44.2 
 & 32.9 & 94.8 & 63.7 & 65.7 \\
\midrule

\multicolumn{7}{l}{\textit{\textbf{SPLADE Family \& Derivates}}} \\
Co-SelfDistil~\cite{formal2022distillation} 
 & 1
 & 51.2 
 & 37.5 & 98.5 & \textbf{73.5} & \textbf{83.4} \\

Co-EnsembleDistil~\cite{formal2022distillation} 
 & 1
 & 50.9 
 & 38.3 & 98.3 & 73.1 & 83.0 \\

ESPLADE~\cite{dudek2023learning}
& 1
& 51.2
& 38.1 & 98.3 & - & - \\

SPLADE-v3~\cite{lassance2024splade} 
 & 5
 & \textbf{52.0} 
 & \underline{\textbf{40.0}} & \underline{\textbf{98.7}} & 72.7 & 83.2 \\

\midrule
\multicolumn{7}{l}{\textit{\textbf{Our Results $^\dagger$}}} \\

BERT (Splade) 
 & 1
 & 50.6
 & 37.5 & 98.2 & 72.8 & 82.1 \\

\hspace{1em} + further MLM 
 & 1
 & 50.5
 & 37.7 & 98.1 & 72.6 & 82.2 \\

ModernBERT (Splade) 
 & 1
 & 47.7 
 & 35.7 & 98.0 & 66.3 & 79.9 \\

\hspace{1em} + further MLM 
 & 1
 & 47.5
 & 35.4 & 97.8 & 69.0 & 79.4 \\

\hspace{1em} + lowercase input 
 & 1
 & 46.8 
 & 36.3 & 97.9 & 69.2 & 80.5 \\

\textbf{ModernBERT-VT}
 & 1
 & \underline{\textbf{52.4}}$_{0.04}$
 & \textbf{38.3}$_{0.08}$ & \textbf{98.4}$_{0.02}$ & \textbf{73.7}$_{0.7}$ & \textbf{83.0}$_{0.3}$ \\

\bottomrule
\end{tabular}
}
\end{table}

\subsection{RQ1: Effectiveness of Vocabulary Transfer}
\label{sec:rq1}

We analyze VT's effectiveness in enabling advanced encoders for sparse retrieval, comparing against baselines from Section~\ref{sec:experimental_setup}. We include SPLADE-v3 as a reference point using publicly released checkpoints; however, SPLADE-v3 employs additional training structure changes and ensemble teacher scores, making it not strictly controlled under our training setting. Therefore, our main comparisons focus on SPLADE-ensemble-distill, which matches our supervision and pipeline.

\subsubsection{The Vocabulary Gap Anomaly}
As shown in Table~\ref{tab:main}, the dense retrieval results confirm the superiority of modern architectures: \textit{ModernBERT (dense)} achieves an average nDCG@10 of 44.2 on BEIR, outperforming \textit{BERT (dense)} at 42.1.
However, this advantage vanishes when applying standard sparse fine-tuning.
Naive \textit{ModernBERT (Splade)} lags behind the older \textit{Co-Ensemble Distil} (47.7 vs 50.9 on BEIR; 35.7 vs 38.3 MRR on MS MARCO), confirming that advanced encoders with raw vocabularies are ill-suited for sparse retrieval without adaptation.

\subsubsection{Effectiveness of Vocabulary Transfer}
Applying VT effectively bridges this gap.
Our \textit{ModernBERT-VT} model achieves a BEIR score of \textbf{52.4}, representing a substantial improvement over the naive implementation.
In our internal comparison of "controlled" implementations (rows marked with $\dagger$), we observe distinct trends.

\noindent\textbf{Naive vs. Lowercase.} Simply forcing the input to be lowercased (\textit{+ lowercase input}) provides mixed results.
While it offers a slight relief for in-domain matching (improving MRR from 35.7 to 36.3), it degrades zero-shot performance on BEIR (dropping from 47.7 to 46.8).
This suggests that while preprocessing can help align some surface forms, it fails to fully leverage the semantic capacity of the backbone and breaks the token integrity required for generalization.

\noindent\textbf{Ineffectiveness of Further MLM.}
Simply continuing MLM on the target corpus (\textit{+ further MLM}) fails to improve performance (BEIR drops to 47.5), confirming the gains stem from VT rather than extra computation.

\noindent\textbf{The VT Advantage.} \textit{ModernBERT-VT} outperforms both the naive and lowercased variants by a wide margin, validating that proper vocabulary adaptation is essential to release the potential of ModernBERT.

\subsubsection{Comparison with State-of-the-Art}

\textit{ModernBERT-VT} establishes a new state-of-the-art on BEIR compared to other sparse retrievers. On BEIR, \textit{ModernBERT-VT} (52.4) outperforms \textit{Co-Ensemble Distil} (50.9) and even \textit{SPLADE-v3} (52.0), despite the latter utilizing a complex 5-teacher ensemble training pipeline.
Table~\ref{tab:beir_detailed_results} confirms these gains maintain efficiency comparable to SPLADE baselines, validating that VT effectively unlocks ModernBERT's robust generalization capabilities.
On MS MARCO, \textit{ModernBERT-VT} (38.3 MRR@10) surpasses all baselines except \textit{SPLADE-v3} (40.0).
Crucially, \textit{SPLADE-v3} benefits from orthogonal enhancements (multi-stage ensemble distillation), whereas we use a standard single-teacher setup to strictly isolate the backbone's impact.
To verify statistical reliability, we additionally train \textit{ModernBERT-VT} with five random seeds: the resulting small standard deviations (BEIR: $\pm 0.04$; MRR@10: $\pm 0.08$) confirm that our reported gains over baselines are not due to seed variance.

\subsubsection{Summary}
Results show the "lag" in advanced encoders is a vocabulary alignment issue, not architectural.
VT unlocks ModernBERT's reasoning power for sparse retrieval, retaining its superior generalization and achieving competitive in-domain performance, effectively closing the dense-sparse gap.

\subsection{RQ2: Impact of Initialization Strategies and VT Components}
\label{sec:rq2}

To verify our VT recipe, we conduct ablation studies on BEIR (OOD) and MS MARCO (In-Domain), examining embedding initialization strategies and adaptation objectives (PDA, OMC, APC).
Results are summarized in Table~\ref{tab:ablation}.

\begin{table}[t]
\centering
\caption{Ablation on BEIR (OOD) and MS MARCO (In-Domain).
We study initialization strategies and VT components.
``Direct'' denotes fine-tuning without MLM; ``Adapted'' applies 20k steps (approx. 1 epoch) MLM before fine-tuning.}
\label{tab:ablation}
\resizebox{0.8\columnwidth}{!}{
\begin{tabular}{l cc cc}
\toprule
& \multicolumn{2}{c}{Direct FT} & \multicolumn{2}{c}{Adapted} \\
\cmidrule(lr){2-3} \cmidrule(lr){4-5}
Model & BEIR & MSM & BEIR & MSM \\
\midrule
\multicolumn{5}{l}{\textit{Baselines \& Init}} \\

ModernBERT (Splade)
 & 47.7 & 35.7 & - & - \\

VT (RandAll-Init)
 & 39.4 & 26.8 & 51.7 & 37.8 \\

 VT (RandNew-Init)
 & 50.2 & 37.0 & 52.0 & 38.1 \\

VT (Mean-Init)
 & 49.7 & 36.9 & 51.8 & 38.2 \\

VT (SubToken-Init)
 & 49.3 & \textbf{37.7} & 52.3 & 37.8 \\

\textbf{VT (Semantic-Init)}
 & \textbf{51.1} & 37.5
 & \textbf{52.4} & \textbf{38.3} \\

 \quad \myarrow 500 steps MLM & - & - & 52.2 & 38.0 \\
\midrule

\multicolumn{5}{l}{\textit{Component Ablation (Semantic-Init)}} \\

\quad w/o PDA
 & 50.9 & 37.5 & 52.2 & 38.1 \\

\quad w/o OMC
 & - & - & 52.1 & 38.0 \\

\quad w/o APC
 & 50.7 & 37.3 & 48.8 & 37.4 \\
\bottomrule
\end{tabular}
}
\end{table}

\subsubsection{Impact of Initialization Strategies}
We compare \textit{Semantic-Init} against several baselines: \textit{Rand-All} (randomizes all embeddings); \textit{Rand-New} (randomizes only non-overlapping tokens); \textit{Mean-Init} (sets new tokens to the overlapping vocabulary centroid); and \textit{SubToken-Init} (averages constituent sub-tokens).
We report results for two settings: ``Direct'' and ``Adapted''.

\noindent\textbf{Semantic-Init is superior to all other methods.}
Table~\ref{tab:ablation} shows that \textit{Semantic-Init} consistently outperforms other strategies.
In the Direct setting, it provides a strong ``warm-start'' (51.1 on BEIR).
MLM adaptation further boosts this to optimal performance (52.4).
Surprisingly, just 500 MLM steps yield near-optimal results (52.2), validating that our method offers a high-quality starting point requiring minimal gradient updates to align the vocabulary.

\noindent\textbf{SubToken-Init is robust but suffers from semantic collapse in Direct FT.}
\textit{SubToken-Init} is a competitive baseline when target embeddings are missing, matching \textit{Semantic-Init} after adaptation (52.3 vs 52.4 on BEIR).
However, in Direct fine-tuning, its high In-Domain accuracy (37.7) contrasts with a significant OOD drop (49.3).
This discrepancy implies ``semantic collapse,'' where embeddings overfit the training domain while drifting from intrinsic semantics.
Table~\ref{tab:case-study} illustrates why \textit{SubToken-Init} can be suboptimal. While effective for clean splits (e.g., ``nationalists''), \textit{SubToken-Init} fails on ambiguous fragments like ``clears'' ($\rightarrow$ ``cle'', ``ars'') or ``centimetres'' ($\rightarrow$ ``cent'', ``imet'', ``res'').
In contrast, \textit{Semantic-Init} identifies robust neighbors independent of surface forms.

\noindent\textbf{Normalization compensates for random initialization.}
Surprisingly, even \textit{Rand-New} surpasses the \textit{ModernBERT (Splade)} baseline in both settings.
This suggests that the gains from normalizing the sparse output space outweigh the degradation caused by randomly initializing a portion of the vocabulary.


\newcommand{\stA}[1]{\textcolor{blue!70!black}{\texttt{#1}}}
\newcommand{\stB}[1]{\textcolor{orange!85!black}{\texttt{#1}}}
\newcommand{\stC}[1]{\textcolor{teal!70!black}{\texttt{#1}}}
\newcommand{\stD}[1]{\textcolor{purple!70!black}{\texttt{#1}}}
\newcommand{\topk}[1]{\texttt{#1}}

\begin{table}[t]
\centering
\caption{Case study of sub-token decompositions and top-5 semantic neighbors and weights for new tokens.}
\small
\setlength{\tabcolsep}{0pt}
\renewcommand{\arraystretch}{1.25}

\begin{tabular}{@{}p{\columnwidth}@{}}
\toprule

\textbf{1) Word:} \stA{collabor}\stB{ations}
\quad \textbf{Sub-tokens:} \texttt{["Ġcollabor", "ations"]} \\
\textbf{Top-5:} (\topk{Ġcollaborate}, 0.123), (\topk{Ġcollaboration}, 0.118), (\topk{Ġcollaborators}, 0.107),
(\topk{Ġcollaborated}, 0.098), (\topk{Ġcollaborative}, 0.090) \\[4pt]

\textbf{2) Word:} \stA{national}\stB{ists}
\quad \textbf{Sub-tokens:} \texttt{["Ġnational", "ists"]} \\
\textbf{Top-5:} (\topk{nationalist}, 0.192), (\topk{nationalism}, 0.126), (\topk{liberals}, 0.056),
(\topk{conservatives}, 0.044), (\topk{militants}, 0.039) \\[4pt]

\textbf{3) Word:} \stA{cle}\stB{ars}
\quad \textbf{Sub-tokens:} \texttt{["Ġcle", "ars"]} \\
\textbf{Top-5:} (\topk{cleared}, 0.165), (\topk{clearing}, 0.064), (\topk{clearer}, 0.049),
(\topk{removes}, 0.043), (\topk{facilitates}, 0.030) \\[4pt]

\textbf{4) Word:} \stA{cent}\stC{imet}\stB{res}
\quad \textbf{Sub-tokens:} \texttt{["Ġcent", "imet", "res"]} \\
\textbf{Top-5:} (\topk{centimeters}, 0.178), (\topk{cm}, 0.169), (\topk{kilograms}, 0.068),
(\topk{inches}, 0.056), (\topk{kilometres}, 0.051) \\

\bottomrule
\end{tabular}
\label{tab:case-study}
\end{table}

\subsubsection{Impact of Adaptation Components}
We dissect the VT adaptation phase by removing specific components from the optimal \textit{Semantic-Init} configuration.

\noindent\textbf{Prior-Aware Distribution Alignment (PDA) and Overlap-Aware Masking Curriculum (OMC).}
Both components are vital for performance. Removing PDA degrades in-domain (MS MARCO MRR: $38.3 \rightarrow 38.1$) and OOD metrics, while removing OMC leads to sub-optimal convergence. Together, they ensure the model learns new vocabulary effectively without overfitting or forgetting pre-trained knowledge of overlapping tokens.

\noindent\textbf{The Critical Role of Activation Potential Calibration (APC).}
Removing APC causes the most significant impact. We find that MLM adaptation sharpens output logits, making them ill-positioned for SPLADE's ReLU activation. This causes instability: without APC, the ``Adapted'' model performs worse on BEIR (48.8) than the ``Direct'' baseline (50.7). APC realigns the activation potential, allowing sparsity regularization to operate effectively and translating semantic gains into retrieval performance.

\begin{table}[t]
\centering
\caption{\textbf{Sensitivity analysis of APC target activation rate} on ModernBERT-VT. We report BEIR (nDCG\textsubscript{10}), FLOPs, and MS MARCO (MRR\textsubscript{10}, R\textsubscript{1k}). Our default $c{=}5$ falls between the 30\% and 40\% activation rate settings.}
\label{tab:apc_sensitivity}
\resizebox{0.7\columnwidth}{!}{
\begin{tabular}{c cc ccc}
\toprule
\textbf{Act. Rate} & \textbf{BEIR} & \textbf{FLOPs} & \textbf{MRR\textsubscript{10}} & \textbf{R\textsubscript{1k}} \\
\midrule
10\% & 51.8 & 10.69 & 38.5 & 98.4 \\
20\% & 52.1 & 10.20 & 38.3 & 98.4 \\
30\% & 52.2 & 9.56 & 38.0 & 98.4 \\
40\% & 52.4 & 8.92 & 38.3 & 98.4 \\
50\% & 52.4 & 7.96 & 38.0 & 98.5 \\
60\% & 52.0 & 6.59 & 37.8 & 98.3 \\
70\% & 51.0 & 5.29 & 37.4 & 98.3 \\
80\% & 48.9 & 4.46 & 37.3 & 98.2 \\
90\% & 48.0 & 4.25 & 37.2 & 98.1 \\
\bottomrule
\end{tabular}
}
\end{table}

\noindent\textbf{Sensitivity to APC Target Activation Rate.}
Table~\ref{tab:apc_sensitivity} reports results across target activation rates from 10\% to 90\%.
BEIR performance forms a plateau in the 30--50\% activation rate range (52.2--52.4), indicating that the method is robust to the exact choice of $c$ and does not require per-backbone tuning.
Above 60\% activation rate, excessive sparsity suppression starves the sparse representation of active dimensions, degrading OOD generalization sharply. Below 20\%, the increased activation density raises FLOPs without improving retrieval quality.
In-domain MS MARCO metrics remain stable throughout (MRR@10: 37.2--38.5), confirming that APC primarily governs the effectiveness--efficiency trade-off for out-of-domain transfer.
Our default $c{=}5$ falls between the 30\% and 40\% activation rate settings, sitting squarely within this plateau and balancing strong BEIR performance with moderate computational cost.

\subsection{RQ3: Full-Vocabulary Transfer vs. Head-Only Adaptation}
\label{sec:rq_vs_esplade}

We compare full-vocabulary transfer (VT) against \textit{head-only} adaptation, a natural alternative where the backbone tokenizer is kept intact while only the output decoder head is replaced to match the target vocabulary. This design, exemplified by ESPLADE~\cite{dudek2023learning}, aims to decouple the encoder’s input space from its output feature space.

\noindent\textbf{Experimental Setup} 
We implement \textbf{ModernBERT-ESPLADE} by replacing only the importance head with our semantic initialization. We further evaluate two adaptation strategies: (1) \textbf{+EMLM}, which applies the unsupervised task from~\cite{dudek2023learning}; and (2) \textbf{+SAP}, which distills sparse lexical representations from a strong teacher (SPLADE-v3) following MILCO~\cite{nguyen2025milco}.

\noindent\textbf{VT Outperforms Head-only Adaptation} 
As shown in Table~\ref{tab:vs_esplade}, full-vocabulary transfer consistently dominates head-only designs. Even without adaptation, VT surpasses ModernBERT-ESPLADE by +1.7 BEIR nDCG@10, suggesting that aligning the input interface with the target domain is crucial. While SAP improves performance, it remains inferior to VT across all benchmarks (e.g., 50.5 vs. 52.4 on BEIR) and requires an additional dependency on a teacher model.

\begin{table}[t]
\centering
\caption{Full-vocabulary transfer vs.\ ESPLADE-style head-only adaptation.}
\label{tab:vs_esplade}
\resizebox{\columnwidth}{!}{
\begin{tabular}{l | c | cc cc}
\toprule
\textbf{Model} & \textbf{BEIR} & \multicolumn{2}{c}{\textbf{MSM}} & \multicolumn{2}{c}{\textbf{DL-19}} \\
 &  & \textbf{MRR\textsubscript{10}} & \textbf{R\textsubscript{1k}} & \textbf{nDCG\textsubscript{10}} & \textbf{R\textsubscript{1k}} \\
\midrule
\multicolumn{6}{l}{\textit{Direct Fine-tuning (No Adaptation)}} \\
ModernBERT-ESPLADE & 49.4 & 36.7 & \textbf{98.4} & \textbf{72.4} & \textbf{80.0} \\
\textbf{ModernBERT-VT (Ours)} & \textbf{51.1} & \textbf{37.5} & 97.9 & 71.9 & 79.7 \\

\midrule
\multicolumn{6}{l}{\textit{Adapted Models}} \\
ModernBERT-ESPLADE + EMLM & 48.0 & 35.6 & 98.3 & 69.1 & 79.6 \\
ModernBERT-ESPLADE + SAP & 50.5 & 37.6 & 98.3 & 71.4 & 81.9 \\
\textbf{ModernBERT-VT (Ours)} & \underline{\textbf{52.4}} & \underline{\textbf{38.3}} & \underline{\textbf{98.4}} & \underline{\textbf{73.7}} & \underline{\textbf{83.0}} \\
\bottomrule
\end{tabular}
}
\end{table}

\noindent\textbf{The Alignment Bottleneck} 
Notably, EMLM degrades performance in our setting (49.4 $\rightarrow$ 48.0). We attribute this to the \textbf{many-to-one mapping conflict}: when the target vocabulary is more granular than the source (target tokens $N >$ source tokens $M$), ESPLADE’s ``first-overlap'' supervision forces $N$ labels onto a single source position. This ill-defined mapping creates ambiguous training signals that confuse the model.
Importantly, this issue is avoided in the \emph{original} EMLM setting, where the target vocabulary is a \emph{word-level unigram} lexicon instead of sub-tokens. 
In contrast, VT unifies the input and output tokenization for MLM training.

\noindent\textbf{Conclusion} 
Modifying only the language-model head is sub-optimal. 
Because the input embedding space remains unchanged, the encoder still organizes representations around the \emph{source} subword units, and the new head must learn a difficult post-hoc translation into the target sparse vocabulary from sparse retrieval supervision alone. 
VT is essential to fully synchronize the model's internal reasoning with the sparse retrieval objective.

\subsection{RQ4: Generalization Across Different Backbones}

To assess the universality of our approach, we extend \textbf{VT} to several widely used encoder architectures, including \path{RoBERTa-base}, \path{RoBERTa-large}, and \path{BERT-base-cased}. Note that RoBERTa models employ a case-sensitive BPE tokenizer with a vocabulary size of 50,265, while \path{BERT-base-cased} uses a case-sensitive WordPiece tokenizer with 28,996 tokens. In this section, we use VT to migrate each backbone’s native vocabulary to the vocabulary of \path{bert-base-uncased}, while keeping all other settings identical to those used in Table~\ref{tab:main}, including the training protocol and distillation pipeline.
Specifically, for RoBERTa, overlapping tokens account for 59.6\% and additional tokens for 40.4\%; for \path{BERT-base-cased}, overlapping tokens constitute 60.9\% and new tokens 39.1\%.
The results in Table~\ref{tab:backbone_transfer} reveal several key insights:

\begin{table}[t]
\centering
\caption{\textbf{Generalization of VT Across Other Backbones.} 
We report average nDCG\textsubscript{10} on BEIR, and MRR/nDCG metrics for MS MARCO and DL-2019. 
The best results for each backbone are \textbf{bolded}.}
\label{tab:backbone_transfer}
\resizebox{0.9\columnwidth}{!}{
\begin{tabular}{l | c | cc cc}
\toprule
\textbf{Model} & \textbf{BEIR} & \multicolumn{2}{c}{\textbf{MSM}} & \multicolumn{2}{c}{\textbf{DL-19}} \\
 &  & \textbf{MRR\textsubscript{10}} & \textbf{R\textsubscript{1k}} & \textbf{nDCG\textsubscript{10}} & \textbf{R\textsubscript{1k}} \\
\midrule

\multicolumn{6}{l}{\textit{\textbf{RoBERTa-base}}} \\
SPLADE 
 & 48.0 
 & 35.1 & 97.8 & 69.4 & 78.6 \\

\hspace{1em} + lowercase input
 & 48.7 
 & 35.7 & 97.9 & 70.5 & 80.0 \\

\textbf{VT (Ours)}
 & \textbf{50.4} 
 & \textbf{37.7} & \textbf{98.3} & \textbf{74.3} & \textbf{82.5} \\
\midrule

\multicolumn{6}{l}{\textit{\textbf{RoBERTa-large}}} \\
SPLADE 
 & 1.4 
 & 0.7 & 19.9 & 0.2 & 4.6 \\

\hspace{1em} + lowercase input
 & 0.8 
 & 0.1 & 8.9 & 0.0 & 0.6 \\

\textbf{VT (Ours)}
 & \textbf{51.3} 
 & \textbf{38.7} & \textbf{98.3} & \textbf{73.0} & \textbf{81.6} \\
\midrule

\multicolumn{6}{l}{\textit{\textbf{bert-base-cased}}} \\
SPLADE 
 & 10.6 
 & 5.9 & 57.5 & 15.2 & 27.7 \\

\hspace{1em} + lowercase input
 & 27.8 
 & 22.8 & 85.4 & 44.8 & 50.6 \\

\textbf{VT (Ours)}
 & \textbf{50.2}
 & \textbf{37.4} & \textbf{98.2} & \textbf{71.7} & \textbf{82.1} \\

\bottomrule
\end{tabular}
}
\end{table}

\noindent\textbf{The vocabulary gap is universal.} The performance degradation is not unique to ModernBERT: standard SPLADE fine-tuning on RoBERTa-base (48.0 nDCG@10 on BEIR) lags behind the BERT-base baseline (50.6). More strikingly, RoBERTa-large and BERT-base-cased suffer from \textit{dense collapse} due to universally high activation values, yielding near-zero performance (e.g., 1.4 and 10.6 on BEIR). This indicates that raw, case-sensitive vocabularies are fundamentally incompatible with sparse retrieval objectives across model families and scales.

\noindent\textbf{VT consistently restores and improves performance.} Applying VT closes these gaps across all backbones. In particular, VT elevates RoBERTa-large from a non-functional regime (1.4) to strong performance (51.3 on BEIR), surpassing the BERT-base baseline and matching the gains observed with ModernBERT.

\noindent\textbf{VT is robust to model scale and tokenizer type.} Its outstanding performance on both RoBERTa-large (355M parameters) and BERT-base-cased (wordpiece tokenizer) highlights VT’s ability to seamlessly adapt to larger parameter spaces and diverse tokenization schemes, all without the need for any additional continuous pre-training.

Overall, these results suggest that VT is a \emph{model-agnostic} solution, providing a robust pathway for integrating diverse encoder architectures into the sparse retrieval paradigm by directly resolving the underlying lexical mismatch.

\subsection{RQ5: Generalization to Inference-Free Retrieval}

To verify the robustness of VT across different sparse retrieval architectures, we evaluate its performance in an inference-free setting~\cite{geng2024towards, shen2025exploring}.
Following the experimental protocol of \citet{shen2025exploring}, we maintain an identical training pipeline, dataset, and hyperparameter configuration, replacing only the backbone encoder to ModernBERT to isolate the impact of VT.
The results in Table~\ref{tab:inference_free_lsr} reveal several key insights:

\begin{table}[t]
\centering
\caption{Performance of VT-adapted models on Inference-free LSR tasks. Follow \citet{shen2025exploring}, we employ IDF enhancement~\cite{geng2024towards} and $\ell_0$ masked flops~\cite{shen2025exploring}.}
\label{tab:inference_free_lsr}
\resizebox{\columnwidth}{!}{
\begin{tabular}{l cc ccc}
\toprule
Model & +IDF & +$\ell_0$ & nDCG\textsubscript{10} & FLOPS & doc\_len \\
\midrule
\multicolumn{6}{l}{\textit{Baselines}} \\
BM25 & -- & -- & 44.5 & -- & -- \\
SPLADE-v3-doc~\cite{lassance2024splade} & -- & -- & 46.8 & 3.3 & 240.3 \\
\midrule
\multicolumn{6}{l}{\textit{Inference-free Models}} \\
\citet{nardini2025effective} & \checkmark & -- & 48.9 & - & - \\
\citet{geng2024towards} & \checkmark & -- & 50.4 & 2.1 & 248.6 \\
CoCondenser-IDF~\cite{shen2025exploring} & \checkmark & & 49.5 & 2.4 & 327.2 \\
CoCondenser-$l_0$~\cite{shen2025exploring} & \checkmark & \checkmark & 50.3 & 2.1 & 275.0 \\
\midrule
\multicolumn{6}{l}{\textit{ModernBERT Adaptation} $\dagger$} \\
ModernBERT (naive) & \checkmark & \checkmark & 44.4 & 1.7 & 337.4 \\
\quad + lowercase input & \checkmark & \checkmark & 49.5 & 2.1 & 296.9 \\
ModernBERT-VT & \checkmark & & 51.4 & 2.1 & 286.8 \\
\textbf{ModernBERT-VT} & \checkmark & \checkmark & \textbf{51.5} & \textbf{2.1} & \textbf{265.3} \\
\bottomrule
\end{tabular}
}
\end{table}

\noindent\textbf{Persistence of the Vocabulary Gap.} Similar to the standard SPLADE setting, the naive ModernBERT underperforms significantly in the inference-free regime, with its 44.4 NDCG@10 failing to even match the BM25 baseline.
Notably, the naive model yields longer document lengths yet lower FLOPS than baselines, suggesting that the lexical mismatch of surface variants is more pronounced in the inference-free architecture.

\noindent\textbf{Effectiveness of VT.} ModernBERT-VT effectively bridges this gap, achieving a state-of-the-art NDCG@10 of 51.5.
This represents a substantial improvement over both the SPLADE-v3 baseline and the $\ell_0$-enhanced models \cite{shen2025exploring}.

\noindent\textbf{Efficiency Synergy.} When combined with $l_0$-flops and $l_0$-activation, ModernBERT-VT achieves the best balance between effectiveness and efficiency, yielding the shortest average document length (265.3) and competitive FLOPs (2.1).

These findings demonstrate that the benefits of VT are not limited to dual-encoder sparse retrievers but extend to high-efficiency, inference-free indices, allowing modern backbones to realize their full potential in latency-critical applications.

\subsection{RQ6: Domain Specialization via Vocabulary Transfer}
\label{sec:rq_domain}

Learned Sparse Retrieval (LSR) is particularly sensitive to tokenization in specialized domains (e.g., Chemistry), where general-purpose tokenizers often over-fragment technical terms. We investigate whether VT can effectively migrate a general backbone to a \textit{domain-specific} vocabulary synthesized from scratch.

\noindent\textbf{Experimental Setup}
We train five BPE tokenizers (sizes 10k--50k) on the \path{dolma-chem} corpus~\cite{basfai_dolma_chem_only_query_generated_2025}, utilizing BERT-style normalization (lowercase, stripping accents). We migrate ModernBERT-base to these vocabularies using VT with sub-token initialization, followed by a brief MLM adaptation (3k steps). Models are fine-tuned on 200k chemistry query-document pairs using InfoNCE loss. We evaluate on \textbf{ChemHotpotQA} and \textbf{ChemNQ}~\cite{kasmaee2024chemteb}. 
Due to the limited size of the training data, we observed variance in model performance. To ensure the reliability of our results, we report the mean and standard deviation across five random seeds.

\begin{table}[t]
\centering
\caption{\textbf{Effectiveness of Domain-Specific Vocabulary Transfer.} 
Results are nDCG\textsubscript{10} (Mean $\pm$ Std) over 5 runs. ``Frag.'' denotes the fragmentation rate (Tokens/Word).}
\label{tab:domain_results}
\resizebox{\columnwidth}{!}{
\begin{tabular}{l r | cc | cc}
\toprule
\textbf{Model} & \textbf{Vocab} & \multicolumn{2}{c|}{\textbf{ChemHotpotQA}} & \multicolumn{2}{c}{\textbf{ChemNQ}} \\
    & \textbf{Size} & \textbf{Frag.} & \textbf{nDCG\textsubscript{10}} & \textbf{Frag.} & \textbf{nDCG\textsubscript{10}} \\
\midrule
ModernBERT-base & 50k & 1.36 & $0.526 \pm 0.081$ & 1.33 & $0.271 \pm 0.031$ \\
\midrule
ModernBERT-VT & 10k & 1.53 & $0.690 \pm 0.010$ & 1.49 & $\mathbf{0.412 \pm 0.016}$ \\
ModernBERT-VT & 20k & 1.40 & $0.720 \pm 0.039$ & 1.38 & $0.342 \pm 0.023$ \\
ModernBERT-VT & 30k & 1.35 & $0.688 \pm 0.025$ & 1.33 & $0.329 \pm 0.011$ \\
ModernBERT-VT & 40k & 1.32 & $\mathbf{0.756 \pm 0.033}$ & 1.31 & $0.322 \pm 0.045$ \\
ModernBERT-VT & 50k & 1.30 & $0.737 \pm 0.026$ & 1.30 & $0.307 \pm 0.021$ \\
\bottomrule
\end{tabular}
}
\end{table}

\noindent\textbf{Results and analysis}
VT markedly improves cross-domain transfer to chemistry, yielding large gains over the original
general-domain vocabulary on both datasets.
Across vocab sizes, fragmentation decreases monotonically as expected, but retrieval performance does
not: ChemHotpotQA peaks at 40k, whereas ChemNQ peaks at 10k.
This decoupling indicates that reduced token fragmentation is beneficial but insufficient to predict
adaptation quality.
With limited domain MLM (3k steps) and only 200k supervised pairs, larger vocabularies introduce more
rare sub-tokens whose embeddings and lexical weights are weakly trained, which can increase variance
and hurt generalization (notably on \textsc{ChemNQ}).
Overall, these results support VT as an effective mechanism for \emph{cross-domain} adaptation of sparse
retrievers, while highlighting that vocab-size selection remains a non-trivial trade-off between
tokenization adequacy and parameter/data efficiency.

\section{Conclusion}
In this work, we demonstrate that the performance degradation of advanced encoders in sparse retrieval is not an architectural deficiency but a consequence of the vocabulary gap. By shifting from lossless-reconstruction-oriented modern tokenization to normalized, representation-compatible vocabularies, we unlock the reasoning power of next-generation backbones for lexical matching. Our proposed VT method provides a robust and efficient solution, allowing models like ModernBERT and RoBERTa to adapt to sparse-friendly vocabularies via geometric initialization and minimal adaptation steps. VT not only restores the competitiveness of these models but establishes new state-of-the-art results across out-of-domain benchmarks and inference-free architectures. Ultimately, our findings highlight that vocabulary design is a fundamental bottleneck in neural sparse retrieval and offer a universal recipe to bridge the divide between modern foundation models and sparse retrieval objectives.

\bibliographystyle{ACM-Reference-Format}
\bibliography{sample-base}

\appendix

\section{Per-Seed Results for ModernBERT-VT}
\label{app:per-seed}

\begin{table}[h]
\centering
\caption{\textbf{Per-seed results for ModernBERT-VT.} We report BEIR nDCG\textsubscript{10}, MS MARCO MRR\textsubscript{10} / R\textsubscript{1k}, and TREC DL-19 nDCG\textsubscript{10} / R\textsubscript{1k}. The last two rows summarize mean and standard deviation over the five seeds.}
\label{tab:per_seed}
\resizebox{0.9\columnwidth}{!}{
\begin{tabular}{l c cc cc}
\toprule
\textbf{Seed} & \textbf{BEIR} & \multicolumn{2}{c}{\textbf{MS MARCO}} & \multicolumn{2}{c}{\textbf{DL-19}} \\
 & nDCG\textsubscript{10} & MRR\textsubscript{10} & R\textsubscript{1k} & nDCG\textsubscript{10} & R\textsubscript{1k} \\
\midrule
42 & 52.44 & 38.38 & 98.37 & 73.90 & 83.10 \\
1 & 52.36 & 38.27 & 98.40 & 72.93 & 83.30 \\
2 & 52.38 & 38.17 & 98.41 & 73.15 & 83.02 \\
3 & 52.36 & 38.32 & 98.39 & 74.73 & 82.66 \\
4 & 52.35 & 38.23 & 98.40 & 73.98 & 82.77 \\
\midrule
Mean & 52.38 & 38.27 & 98.39 & 73.74 & 82.97 \\
Std  & 0.04  & 0.08  & 0.02  & 0.69  & 0.27 \\
\bottomrule
\end{tabular}
}
\end{table}

To support the mean $\pm$ std statistics reported for ModernBERT-VT in Table~\ref{tab:main}, we list the raw per-seed results in Table~\ref{tab:per_seed}. All five runs share identical training configurations, differing only in the random seed used for data shuffling and parameter initialization of non-transferred components. The seed originally reported in the main text is \textbf{42}; the additional four seeds (\texttt{1}, \texttt{2}, \texttt{3}, \texttt{4}) were run post-hoc to quantify variance.

\section{Proof of Theorem~\ref{thm:sc}}
\label{app:proof-thm-sc}

For a score class $\mathcal{F}$, write the empirical Rademacher complexity
$\hat{\mathfrak{R}}_n(\mathcal{F})
=\mathbb{E}_{\sigma}\!\big[\sup_{f\in\mathcal{F}}
\frac{1}{n}\sum_{i=1}^n \sigma_i f(z_i)\big]$.

\begin{lemma}[Uniform generalization for bounded $L$-Lipschitz losses]
\label{lem:UG}
Let $\ell:\mathbb{R}\to[0,1]$ be $L$-Lipschitz and $\mathcal{F}$ be any real-valued score class.
Then for any $\delta\in(0,1)$, with probability at least $1-\delta$,
\[
\sup_{f\in\mathcal{F}}\Big(\mathcal{L}(f)-\hat{\mathcal{L}}_n(f)\Big)
\ \le\ 2L\,\hat{\mathfrak{R}}_n(\mathcal{F})\ +\ C_{\mathrm{gen}}\sqrt{\tfrac{\log(1/\delta)}{n}},
\]
for a universal constant $C_{\mathrm{gen}}>0$.
\end{lemma}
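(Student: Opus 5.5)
We use the standard three-step argument: bounded differences, then symmetrization, then contraction. Let $z_{1:n}$ be the i.i.d.\ sample and set $\Phi(z_{1:n})\triangleq\sup_{f\in\mathcal{F}}\big(\mathcal{L}(f)-\hat{\mathcal{L}}_n(f)\big)$, where $\hat{\mathcal{L}}_n(f)=\frac1n\sum_i \ell(f(z_i))$. We write the loss as $\ell\circ f$, absorbing any label dependence into $z_i$.

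\textbf{Step 1 (concentration).} Since $\ell\in[0,1]$, replacing one sample $z_i$ by $z_i'$ changes $\Phi$ by at most $1/n$. McDiarmid's inequality then gives, with probability at least $1-\delta/2$,
\[
\Phi\le\mathbb{E}\Phi+\sqrt{\tfrac{\log(2/\delta)}{2n}}.
\]

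\textbf{Step 2 (symmetrization).} We introduce a ghost sample $z'_{1:n}$ and Rademacher signs $\sigma_{1:n}$. The usual symmetrization argument gives
\[
\mathbb{E}\Phi\le 2\,\mathbb{E}_z\hat{\mathfrak{R}}_n(\ell\circ\mathcal{F}).
\]
To replace the expected complexity by the empirical one $\hat{\mathfrak{R}}_n(\ell\circ\mathcal{F})$, we apply McDiarmid a second time. Changing one $z_i$ moves $\hat{\mathfrak{R}}_n(\ell\circ\mathcal{F})$ by at most $1/n$, so with probability at least $1-\delta/2$,
\[
\mathbb{E}_z\hat{\mathfrak{R}}_n\le\hat{\mathfrak{R}}_n+\sqrt{\tfrac{\log(2/\delta)}{2n}}.
\]

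\textbf{Step 3 (contraction).} Since $\ell$ is $L$-Lipschitz, the Ledoux--Talagrand contraction lemma in its scalar, one-sided form (Talagrand's lemma as stated in \cite{Mohri2018}) gives
\[
\hat{\mathfrak{R}}_n(\ell\circ\mathcal{F})\le L\,\hat{\mathfrak{R}}_n(\mathcal{F}).
\]
This form needs neither $\ell(0)=0$ nor absolute values, because $\hat{\mathfrak{R}}_n$ is defined without an absolute value. A union bound over the two events from Steps 1 and 2 then yields, with probability at least $1-\delta$,
\[
\Phi\le 2L\,\hat{\mathfrak{R}}_n(\mathcal{F})+3\sqrt{\tfrac{\log(2/\delta)}{2n}}.
\]

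The main obstacle is matching the stated form $C_{\mathrm{gen}}\sqrt{\log(1/\delta)/n}$, since our bound carries $\log(2/\delta)$ rather than $\log(1/\delta)$. We have $\log(2/\delta)\le(1+\log 2/\log(1/\delta))\log(1/\delta)$, but this ratio blows up as $\delta\to1$, so it gives no universal constant on all of $(0,1)$. We would handle this in one of two ways:
\begin{itemize}
\item restrict to $\delta\le1/2$, where $\log(2/\delta)\le2\log(1/\delta)$ and one can take $C_{\mathrm{gen}}=3$; or
\item absorb the additive constant into the statement, i.e.\ read $C_{\mathrm{gen}}\sqrt{\log(1/\delta)/n}$ as shorthand for $3\sqrt{\log(2/\delta)/(2n)}$.
\end{itemize}
We would state the bound explicitly as $3\sqrt{\log(2/\delta)/(2n)}$ and note that it is dominated by $C_{\mathrm{gen}}\sqrt{\log(1/\delta)/n}$ in the regime of interest. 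This lemma is then used, through \eqref{eq:rad-composition} and \eqref{eq:rad-contraction-G}, in the standard ERM decomposition that proves Theorem~\ref{thm:sc}.
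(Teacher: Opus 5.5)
Your argument follows the paper's route (McDiarmid, symmetrization, Ledoux--Talagrand contraction), and Steps 1--3 are correct. They are also more careful than the paper's sketch. The paper writes the symmetrization bound directly with the empirical complexity $\hat{\mathfrak{R}}_n(\mathcal{G})$ on the right of an expectation inequality. It therefore skips the expected-to-empirical conversion that your second McDiarmid step handles, and it never addresses the $\log(2/\delta)$ constant either. The one loose end is the conclusion: as proposed, you establish the lemma only for $\delta\le 1/2$, or in a reworded form. The statement claims every $\delta\in(0,1)$.

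That obstacle comes from the union bound, not from the lemma, so neither workaround is needed. Set $\Psi\triangleq\Phi-2\hat{\mathfrak{R}}_n(\ell\circ\mathcal{F})$. By your own observations, changing one $z_i$ moves $\Phi$ by at most $1/n$ and moves $\hat{\mathfrak{R}}_n(\ell\circ\mathcal{F})$ by at most $1/n$. Hence $\Psi$ has bounded differences $3/n$, and your symmetrization step gives $\mathbb{E}\Psi\le 0$. A single application of McDiarmid to $\Psi$ gives $\Pr(\Psi\ge t)\le\exp(-2nt^2/9)$. Choosing $t=3\sqrt{\log(1/\delta)/(2n)}$ and then applying your contraction step yields, with probability at least $1-\delta$,
\[
\Phi\;\le\;2\hat{\mathfrak{R}}_n(\ell\circ\mathcal{F})+3\sqrt{\tfrac{\log(1/\delta)}{2n}}\;\le\;2L\,\hat{\mathfrak{R}}_n(\mathcal{F})+\tfrac{3}{\sqrt{2}}\sqrt{\tfrac{\log(1/\delta)}{n}}.
\]
This holds for all $\delta\in(0,1)$, so $C_{\mathrm{gen}}=3/\sqrt{2}$ works uniformly.
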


\begin{proof}
Let $\mathcal{G}=\{\ell\circ f:f\in\mathcal{F}\}\subset[0,1]$.
Standard symmetrization gives
$\mathbb{E}\big[\sup_{g\in\mathcal{G}}(\mathbb{P}g-\hat{\mathbb{P}}g)\big]\le 2\hat{\mathfrak{R}}_n(\mathcal{G})$,
and Ledoux--Talagrand contraction yields
$\hat{\mathfrak{R}}_n(\mathcal{G})\le L\,\hat{\mathfrak{R}}_n(\mathcal{F})$
(e.g.,~\cite{ledoux1991probability, Mohri2018}).
Since $\mathcal{G}\subset[0,1]$, changing one sample changes the supremum by at most $1/n$,
so McDiarmid's inequality converts the expectation bound to the stated high-probability form,
absorbing numerical constants into $C_{\mathrm{gen}}$~\cite{mcdiarmid1989method}.
\end{proof}

\begin{lemma}[RC transfer between $V$ and $V'$]\label{lem:rc-transfer}
Under \eqref{eq:RC} and $L$-Lipschitz $\ell$,
\[
\inf_{h\in\mathcal{H}_{V'}}\mathcal{L}(h)
\ \le\
\inf_{h\in\mathcal{H}_V}\mathcal{L}(h)
\ +\ L\,\varepsilon_{\mathrm{RC}}.
\]
\end{lemma}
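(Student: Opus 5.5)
The plan is to compare the two infima through an approximating sequence, using \eqref{eq:RC} to transport each candidate in $\mathcal{H}_V$ to a nearby candidate in $\mathcal{H}_{V'}$. Lipschitzness of $\ell$ will control the resulting change in the risk.

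Fix $\eta>0$ and pick $h\in\mathcal{H}_V$ with $\mathcal{L}(h)\le\inf_{\mathcal{H}_V}\mathcal{L}+\eta$. The infimum is finite, since $\ell\in[0,1]$. Write $h(q,d)=\langle\beta,u_\theta(q,d)\rangle$ with $\beta\ge0$ and $\|\beta\|_1\le B$. By \eqref{eq:RC} there is $\beta'\ge0$ with $\|\beta'\|_1\le B$, so $h'(q,d)\triangleq\langle\beta',u'_\theta(q,d)\rangle$ belongs to $\mathcal{H}_{V'}$. Moreover, $|h(q,d)-h'(q,d)|\le\varepsilon_{\mathrm{RC}}$ holds uniformly over all pairs $(q,d)$.

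Next, compare risks pointwise. I take $\mathcal{L}(f)=\mathbb{E}_z[\ell(f(z))]$, where any label or margin dependence is absorbed into the composition. In the pairwise case, $f$ is evaluated on the triplet difference, and the uniform RC bound still applies to each score. Lipschitzness then gives $|\ell(h'(z))-\ell(h(z))|\le L|h'(z)-h(z)|\le L\varepsilon_{\mathrm{RC}}$ for every $z$. Taking expectations yields $\mathcal{L}(h')\le\mathcal{L}(h)+L\varepsilon_{\mathrm{RC}}$.

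Chaining these inequalities gives $\inf_{\mathcal{H}_{V'}}\mathcal{L}\le\mathcal{L}(h')\le\inf_{\mathcal{H}_V}\mathcal{L}+\eta+L\varepsilon_{\mathrm{RC}}$. Letting $\eta\downarrow0$ proves the claim.

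The only delicate point is the interface between the loss and the score. If the loss depends on a label $y$ or on a pair of scores, say $\ell(y\cdot f)$ or $\ell(f(q,d^+)-f(q,d^-))$, the Lipschitz step picks up a factor $|y|\le1$ or $2$. I would handle this by stating the loss as $L$-Lipschitz in the scalar argument actually fed to it. I would also apply RC to the difference features, as in the paper's corollary, so that $\varepsilon_{\mathrm{RC}}$ bounds that argument directly and the constant $L\varepsilon_{\mathrm{RC}}$ is preserved.

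Measurability of $z\mapsto\ell(h(z))$ is routine and will be assumed.
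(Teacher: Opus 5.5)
Your proof is correct and follows the same route as the paper's. Fix $\beta$, take the $\beta'$ supplied by \eqref{eq:RC}, use Lipschitzness of $\ell$ to bound the pointwise loss difference by $L\varepsilon_{\mathrm{RC}}$, take expectations, and pass to the infimum. Your $\eta$-approximation is just an explicit version of that last step, and your remarks on label-dependent and pairwise losses, which the paper leaves implicit in $\mathcal{L}(h)=\mathbb{E}[\ell(h(z))]$, are consistent with its corollary on triplet-difference features.
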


\begin{proof}
Fix $\beta\ge0$ with $\|\beta\|_1\le B$. By \eqref{eq:RC}, there exists $\beta'\ge0$, $\|\beta'\|_1\le B$ such that
$\sup_{(q,d)}|\langle\beta,u_\theta(q,d)\rangle-\langle\beta',u'_\theta(q,d)\rangle|\le\varepsilon_{\mathrm{RC}}$.
Applying $L$-Lipschitz $\ell$, taking expectations, and then taking infima over feasible $\beta$ proves the claim.
\end{proof}

\paragraph{Proof of Theorem~\ref{thm:sc}.}
Let $\hat h_G\in\mathcal{H}_{V'}$ be an ERM and let $h^\star\in\arg\min_{h\in\mathcal{H}_{V'}}\mathcal{L}(h)$.
By ERM optimality,
\begin{align}
\mathcal{L}(\hat h_G)-\mathcal{L}(h^\star)
\;&\le\;
\big(\mathcal{L}(\hat h_G)-\hat{\mathcal{L}}_n(\hat h_G)\big)
+\big(\hat{\mathcal{L}}_n(h^\star)-\mathcal{L}(h^\star)\big) \nonumber\\
&\le\;
2\sup_{h\in\mathcal{H}_{V'}}
\big(\mathcal{L}(h)-\hat{\mathcal{L}}_n(h)\big).
\end{align}

Applying Lemma~\ref{lem:UG} to $\mathcal{H}_{V'}$ and using \eqref{eq:rad-composition} gives, with prob.\ $\ge 1-\delta$,
\begin{align}
\mathcal{L}(\hat h_G)-\inf_{h\in\mathcal{H}_{V'}}\mathcal{L}(h)
&\le
4L\,\hat{\mathfrak{R}}_n(\mathcal{H}_{V'})
+2C_{\mathrm{gen}}\sqrt{\tfrac{\log(1/\delta)}{n}} \nonumber\\
&\le
4BL\,\hat{\mathfrak{R}}_n(\mathcal{W}_{V'};\|\cdot\|_\infty)
+2C_{\mathrm{gen}}\sqrt{\tfrac{\log(1/\delta)}{n}}.
\label{eq:est-core}
\end{align}
Finally, Lemma~\ref{lem:rc-transfer} yields
$\inf_{h\in\mathcal{H}_{V'}}\mathcal{L}(h)\le \inf_{h\in\mathcal{H}_{V}}\mathcal{L}(h)+L\varepsilon_{\mathrm{RC}}$,
and combining with \eqref{eq:est-core} proves \eqref{eq:main-bound}.\qed

\end{document}